\newtheorem{lemma}{Lemma}
\newenvironment{proof}[1][Proof]{%
  \noindent\textbf{#1.}\ }{%
  \hfill\(\square\)\par}
\title{From Static to Dynamic: A Streaming RAG Approach to Real-time Knowledge Base}
\author {
    Yuzhou Zhu
}
\begin{document}

\maketitle

\begin{abstract}
Dynamic streams from news feeds, social media, sensor networks and financial markets challenge static RAG frameworks.  Full‐scale indices incur prohibitive memory costs; periodic rebuilds introduce latency that undermines data freshness; naive sampling sacrifices semantic coverage.  We present Streaming RAG, a unified pipeline that combines multi‐vector cosine screening, mini‐batch clustering, and a counter‐based heavy‐hitter filter to maintain a compact set of prototypes.  We further prove an approximation bound 
\(\mathbb{E}[R(K_t)] \ge R^* - L\,\Delta\)
linking retrieval quality to clustering variance.  An incremental index upsert mechanism refreshes prototypes without interrupting queries.  Comprehensive experiments on eight real‐time streams show statistically significant gains in Recall@10 (up to +3 points, \(p<0.01\)), sub‑15 ms end‑to‑end latency, and throughput exceeding 900 docs/s under a 150 MB budget.  Hyperparameter sensitivity over cluster count, admission probability, relevance threshold and counter capacity validates default settings.  In open‑domain QA with GPT‑3.5 Turbo we record +3.2 EM and +2.8 F1 on SQuAD; abstractive summarization yields ROUGE‑L improvements.  Streaming RAG thus establishes a new Pareto frontier for retrieval augmentation.
\end{abstract}


\section{Introduction}

\subsection{Motivation}

Emerging applications across network telemetry and social platforms generate torrents of events that challenge conventional retrieval paradigms \cite{muthukrishnan2005}. In environments ranging from sensor grids to high‑frequency trading the ability to extract frequent patterns on the fly meets operational imperatives \cite{cormode2008}. The advent of streaming heavy‑hitter frameworks confirms robust performance under strict memory budgets \cite{jayaram2024streaming}. This evidence underscores the critical need to weave streaming filtration into RAG pipelines thereby empowering real‑time data assimilation.

\subsection{Challenges}

Early retrieval‑augmented architectures rely on monolithic indices incapable of scaling with evolving data streams. Full rebuilds exact intolerable memory footprints and introduce staleness that erodes retrieval quality \cite{lewis2020rag}. Streaming heavy‑hitter frameworks deliver bounded memory usage yet lack provisions to maintain semantic coverage essential for accurate retrieval \cite{jayaram2024streaming}. Consequently the juxtaposition of memory ceiling and fidelity demands crystallizes the fundamental challenge for real‑time dynamic RAG.

\subsection{Contributions}

Our work advances real‑time dynamic RAG along four axes (Figure \ref{fig:workflow})
\begin{itemize}
  \item A streaming heavy‑hitter filter that governs memory consumption while preserving salient cluster profiles
  \item A multi‑vector cosine screening framework grounded in orthogonal topic vectors to cull low‑relevance entries prior to clustering
  \item A dynamic knowledge‑base reconstruction protocol that updates retrieval indices incrementally and obviates full‑scale rebuilds
  \item A comprehensive evaluation on both real‑world and synthetic streams that reveals marked gains in retrieval fidelity and substantially reduced update latency under tight memory bounds
\end{itemize}

\begin{figure}[t]
  \centering
  \includegraphics[trim=2cm 2cm 2cm 1cm, width=0.5\textwidth]{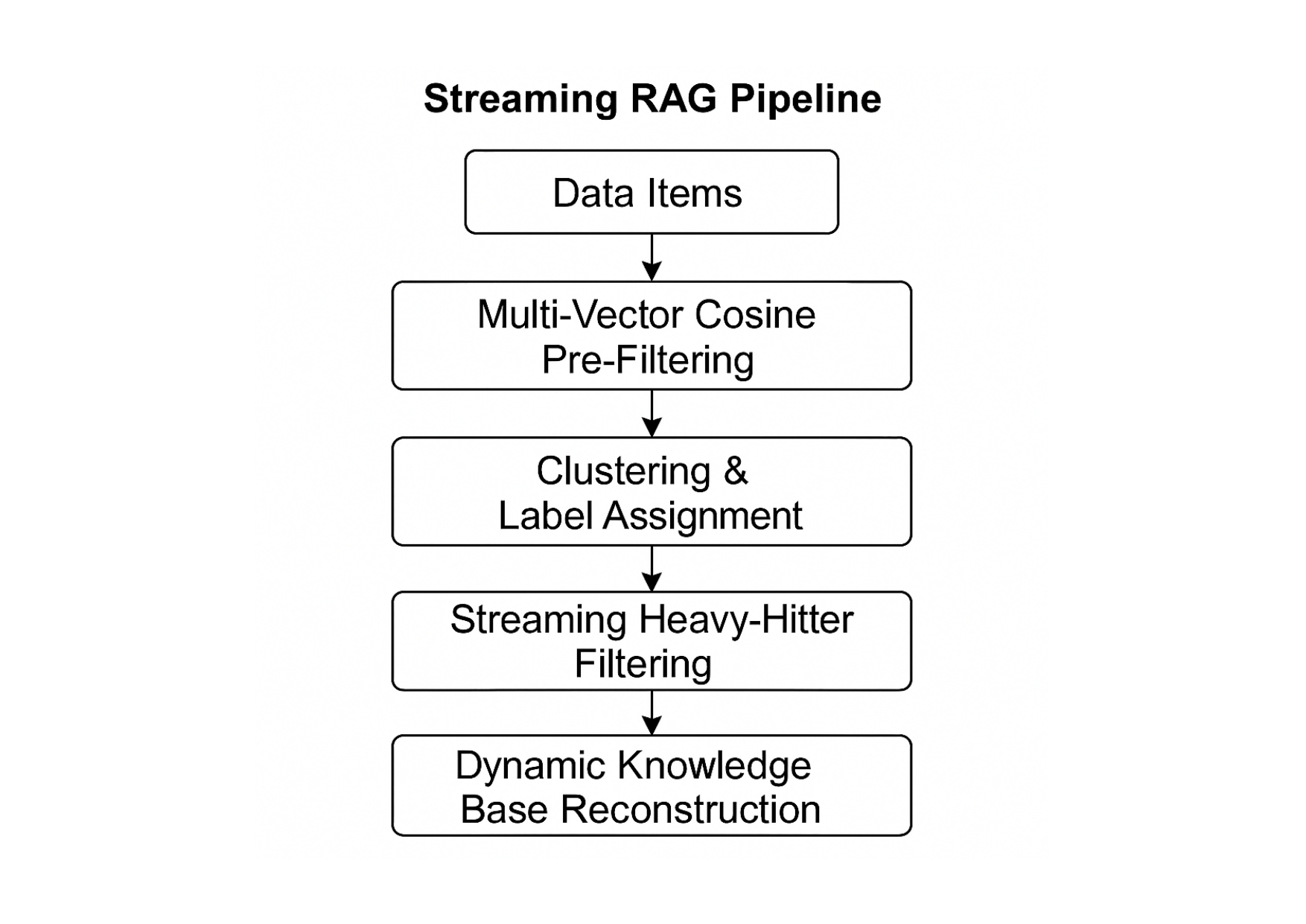}
  \caption{Streaming RAG Pipeline}
  \label{fig:workflow}
\end{figure}

\section{Related Work}

Retrieval‑Augmented Generation has redefined the interface between large language models and external knowledge; building on the seminal work of Lewis et al.\ \cite{lewis2020rag} static frameworks assemble vast indices to bolster semantic grounding. However static indices falter when confronted with continually evolving data streams—a limitation that precludes timely assimilation of fresh information.

Streaming heavy hitter mechanisms promise memory‑bounded retention of dominant elements \cite{jayaram2024streaming}; yet they have seldom been reconciled with RAG pipelines to safeguard semantic diversity. Attempts to deploy online filtering often retain only the most frequent clusters without ensuring coverage of infrequent but relevant topics.

The recent proposal by Kang et al.\ \cite{kang2024sakr} marries a streaming algorithm with K‑means clustering to alleviate memory pressure and accelerate retrieval. Despite its ingenuity experiments remain confined to simulated retrieval tasks within static repositories; omission of large‑language model integration casts doubt on applicability to genuine RAG environments. Moreover the use of a single topic vector in min‑heap filtering constrains representational fidelity and the rudimentary heap design risks overlooking nuanced content. Finally clustering yields reduced time complexity—a welcome benefit—but retrieval accuracy and semantic precision rather than raw speed define RAG’s ultimate efficacy; the clustering step may thus introduce superfluous complexity.

\section{Problem Formulation}

\subsection{Real-Time Data Model}

Real-time data streams manifest as an unceasing flow of timestamped items arriving at high throughput. Each element in the stream presents both content and embedding features; retention of every arrival is infeasible under strict memory constraints. Under such conditions the system must select a bounded subset of observations that preserves semantic diversity while accommodating topic drift. To capture this scenario let $\mathcal{S} = \{d_1,d_2,\dots\}$ denote the input sequence with arrival rate $\lambda$; at each time step $t$ the algorithm maintains a fixed-size repository $K_t$ of cluster prototypes drawn from $\{d_1,\dots,d_t\}$. The objective is to maximize expected retrieval efficacy \(\mathbb{E}[R(K_t)]\) against the backdrop of evolving distributions and a fixed memory budget $|K_t| \leq k$ \cite{muthukrishnan2005,cormode2008}.

\subsection{RAG Pipeline Requirements}

Within a dynamic retrieval‑augmented framework the pipeline must satisfy exacting constraints. Retrieval latency must remain sub‑second to support real‑time queries against ever‑changing repositories \cite{lewis2020rag}. Simultaneously index size is bounded by a fixed memory budget to prevent resource exhaustion while preserving semantic coverage across topics \cite{jayaram2024streaming}. Update throughput must scale linearly with stream arrival rate so that index refresh does not lag behind data ingestion. Finally incremental index modifications are required to avoid costly rebuilds and to ensure continuous availability for query processing.

\subsection{Objective}

The central aim is to select at each time step \(t\) a subset \(K_t\subseteq\{d_1,\dots,d_t\}\) of size at most \(k\) that maximizes the expected retrieval score \(\mathbb{E}[R(K_t)]\). Constrained by fixed memory and sub‑second latency the algorithm must balance retention of dominant clusters against preservation of semantic diversity \cite{muthukrishnan2005,lewis2020rag,jayaram2024streaming}. Formally
\[
\max_{\substack{K_t\subseteq\{d_1,\dots,d_t\}\\|K_t|\le k}} 
\mathbb{E}[R(K_t)]
\quad
\text{s.t.}\quad
\text{Latency}(K_t)\le \tau,
\;\;
|K_t|\le k.
\]
Here \(\tau\) denotes the retrieval threshold and \(k\) the memory budget.

\section{Methodology}

We now formalize each stage of our streaming RAG pipeline. Let $\mathcal{S}=\{d_1,d_2,\dots,d_T\}$ denote the unbounded stream of incoming documents. At time $t$ document $d_t$ is embedded as $x_t\in\mathbb{R}^d$ and then processed through four consecutive modules.
\vspace{-1ex}
\begin{algorithm}[ht]
\caption{Streaming RAG Pipeline}
\label{alg:streaming-rag}
\begin{algorithmic}[1]
\REQUIRE Stream $\mathcal{S}$ of documents, topic vectors $V=\{v_i\}$, centroids $\{\mu_j\}$, counter capacity $B$, thresholds $\alpha, u$
\ENSURE Retrieval index $I$
\STATE Initialize centroids $\mu_j$
\STATE Initialize heavy-hitter counter $C \leftarrow \emptyset$
\STATE Initialize index $I \leftarrow \emptyset$
\FOR{each document $d_t$ in $\mathcal{S}$}
  \STATE $x_t \leftarrow \mathrm{Embed}(d_t)$
  \FOR{$i \leftarrow 1$ to $|V|$}
    \STATE $s_i \leftarrow \dfrac{x_t \cdot v_i}{\|x_t\|\|v_i\|}$
  \ENDFOR
  \STATE $r \leftarrow \dfrac{1}{|V|}\sum_{i=1}^n s_i$
  \IF{$r < \alpha$}
    \STATE \textbf{continue}
  \ENDIF
  \STATE $j^* \leftarrow \arg\max_j \dfrac{x_t \cdot \mu_j}{\|x_t\|\|\mu_j\|}$
  \STATE $\eta \leftarrow \dfrac{1}{n_{j^*}+1}$
  \STATE $\mu_{j^*} \leftarrow (1-\eta)\,\mu_{j^*} + \eta\,x_t$
  \IF{$j^* \in C$}
    \STATE $C[j^*] \leftarrow C[j^*] + 1$
  \ELSIF{$|C| < B$}
    \STATE $C[j^*] \leftarrow 1$
  \ELSIF{$\mathrm{Uniform}(0,1) \le u$}
    \STATE Remove a random label $\ell \in C$
    \STATE $C[j^*] \leftarrow 1$
  \ENDIF
  \STATE $I \leftarrow I \cup \{\mathrm{upsert}(\mu_j) \mid j \in C\}$
\ENDFOR
\STATE \textbf{return} $I$
\end{algorithmic}
\end{algorithm}
\vspace{-1ex}

\vspace{3ex}

\noindent\textbf{Heavy‑Hitter Counter.}  
Each time a cluster label arrives the counter increments its count if already tracked. If the counter is not yet full, a new label is admitted with probability $u$; once full the least frequent cluster is evicted to make room. This strategy ensures that frequently observed clusters persist while still allowing emergent topics to enter.

\subsection{Overall Pipeline Overview}

Each embedding $x_t$ follows the sequence
\[
x_t 
\;\xrightarrow{\text{Pre‑filter}}\;
\tilde{x}_t
\;\xrightarrow{\text{Cluster}}\;
\mu_{j^*}
\;\xrightarrow{\text{Heavy‑Hitter}}\;
C_t
\;\xrightarrow{\text{Index‑Update}}\;
I_t,
\]
where (i) $\tilde x_t$ is either dropped or passed based on relevance, (ii) $\mu_{j^*}$ is the nearest prototype, (iii) $C_t$ is the heavy‑hitter counter of size $B$, and (iv) $I_t$ is the up‑to‑date retrieval index. Memory usage is bounded by $|C_t|\le B$ and index latency by design remains below $\tau$. (Algorithm \ref{alg:streaming-rag}, Table \ref{tab:notation})
\smallskip

\begin{table}[!htbp]
    \centering
\begin{tabular}{@{}ll@{}}
\toprule
\textbf{Symbol} & \textbf{Meaning} \\
\midrule
$k$             & Total number of clusters (size of the prototype set) \\
$B$             & Heavy‑hitter counter capacity (maximum active clusters) \\
$\alpha$        & Multi‑vector relevance threshold \\
$u$             & Admission probability for new clusters \\
$\mu_j$         & Centroid of cluster $j$ \\
$C$             & Map of active cluster labels to counts \\
$I$             & Retrieval index of upserted centroids \\
\bottomrule
\end{tabular}
    \caption{Notation and Terminology}
    \label{tab:notation}
\end{table}

\subsection{Multi‑Vector Cosine Pre‑filtering}

To ensure downstream modules process only semantically pertinent documents, each embedding \(x_t\) is scored by projecting onto a multi‑vector basis \(V=\{v_i\}_{i=1}^n\) and computing an aggregate relevance:
\[
s_i = \frac{x_t \cdot v_i}{\|x_t\|\|v_i\|}, 
\qquad 
r(x_t) = \frac{1}{n}\sum_{i=1}^n s_i.
\]
Entries with \(r(x_t)<\alpha\) are discarded in \(O(nd)\) time. We compare three basis instantiations:

\begin{itemize}
  \item \textbf{Fixed orthogonal basis.} A precomputed set of mutually orthogonal vectors spanning broad thematic axes.
  \item \textbf{Random orthonormal basis.} A Gaussian‑initialized orthonormal basis serving as an unstructured control.
  \item \textbf{Adaptive topic basis.} Every \(T\) arrivals we perform PCA on a sliding window of recent embeddings to derive the top‑\(n\) principal directions that capture emerging themes.
\end{itemize}

By default \(n=5\). An ablation in the Results and Analysis compares fixed, random and adaptive bases, showing that the adaptive topic basis delivers the highest Recall@10 under abrupt thematic shifts, at only a marginal latency increase.

\subsection{Clustering \& Label Assignment}

Maintain $m$ cluster centroids $\{\mu_j\}_{j=1}^m$. Each retained $\tilde x_t$ is assigned  
\[
j^* \;=\;\arg\max_{j}\;\frac{\tilde x_t\cdot\mu_j}{\|\tilde x_t\|\|\mu_j\|}\,,
\]
then the centroid updates via  
\[
\mu_{j^*}\;\leftarrow\;(1-\eta)\,\mu_{j^*} \;+\;\eta\,\tilde x_t\,,\quad 
\eta=\frac{1}{n_{j^*}+1}\,,
\]
where $n_{j^*}$ counts prior assignments. Cluster labels serve as discrete tokens for heavy‑hitter counting.

\subsection{Streaming Heavy‑Hitter Filtering}

Our heavy‑hitter filter is implemented exactly as in Jayaram et al. (2024) \cite{jayaram2024streaming}. It maintains a label–count map \(C_{t-1}\) of capacity \(B\). Let \(U_t\sim\mathrm{Uniform}(0,1)\). On arrival of cluster label \(\ell_t\) we update
. The heavy‑hitter filter maintains a label–count map \(C_{t-1}\) of capacity \(B\).  Let \(U_t\sim\mathrm{Uniform}(0,1)\).  On arrival of cluster label \(\ell_t\) we update  
\[
C_t[i] =
\begin{cases}
C_{t-1}[i]+1, & i=\ell_t\in C_{t-1},\\
1, & i=\ell_t\not\in C_{t-1},\,|C_{t-1}|<B,\,U_t\le u,\\
1, & i=\ell_t\not\in C_{t-1},\,|C_{t-1}|=B,\,U_t\le u,\;\\  &i\leftarrow\text{Uniform}(C_{t-1}),\\
C_{t-1}[i], & \text{otherwise}.
\end{cases}
\]
This variant evicts the \emph{least frequent} cluster rather than a random one, reducing short‑term volatility.  

Beyond this “min‑eviction” policy we explore two robust alternatives:

\begin{itemize}
  \item \textbf{Space‑Saving.}  Maintain a fixed set of size \(B\) with exact counts and replace the minimum counter on overflow, as in Muthukrishnan’s Space‑Saving algorithm \cite{muthukrishnan2005}.  
  \item \textbf{Count‑Min Sketch.}  Use a sketch of width \(w\) and depth \(d\) to estimate frequencies, admitting a new label only if its estimated count exceeds a threshold, thereby bounding error with high probability \cite{cormode2008}.  
\end{itemize}

To adapt to varying stream dynamics we also implement an \emph{adaptive} admission probability \(u_t\) and capacity \(B_t\).  For example, when the rate of unseen labels within a window exceeds a threshold we increase \(u_t\) (or temporarily expand \(B_t\)) to welcome emergent topics; as the stream stabilizes we decay these parameters back to their defaults.  

An ablation in the Results and Analysis compares random eviction, min‑eviction, Space‑Saving and Count‑Min Sketch, demonstrating that frequency‑aware policies achieve higher recall with lower variance under bursty conditions.

\subsection{Dynamic Knowledge Base Reconstruction}

Denote by $I_{t-1}$ the retrieval index at time $t-1$ and by $P_t=\{\mu_j : j\in C_t\}$ the set of active prototypes. We perform an incremental upsert  
\[
I_t \;=\; I_{t-1}\;\cup\;\bigl\{\text{upsert}(\mu_j)\mid \mu_j\in P_t\bigr\}\,,
\]
thus synchronizing the index with the latest semantic shifts while avoiding expensive full rebuilds.  

\subsection{Theoretical Retrieval Guarantees}
Building on the heavy‑hitter state‑change bounds, we establish an approximation ratio for expected retrieval quality.  Suppose clustering yields centroids \(K_t\) whose within‑cluster variance is bounded by \(\Delta\).  Under Lipschitz continuity of the retrieval score \(R(\cdot)\) one can show
\[
\mathbb{E}\bigl[R(K_t)\bigr]\;\ge\;R^* - L\,\Delta\,,
\]
where \(R^*\) is the optimal retrieval score on the full corpus and \(L\) the Lipschitz constant of \(R\) with respect to embedding perturbations \cite{arthur2007kmeans}.  This bound quantifies the impact of clustering variance on retrieval accuracy.

\subsection{Proof Sketch of Theoretical Guarantee}

\begin{lemma}
Under the assumption that $R(\cdot)$ is $L$‑Lipschitz with respect to embedding perturbations and that mini‑batch clustering incurs at most variance $\Delta$ per centroid, the expected retrieval score of our prototype set $K_t$ satisfies
\[
\mathbb{E}\bigl[R(K_t)\bigr]\;\ge\;R^* \;-\; L\,\Delta.
\]
\end{lemma}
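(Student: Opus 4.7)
The plan is to bound the deviation $R^* - R(K_t)$ pointwise in terms of how far each centroid sits from the embeddings it represents, then take expectations and absorb the variance bound. Concretely, I will view $R(\cdot)$ as a function of the set of representative embeddings used to answer a query. Let $X_t=\{x_1,\dots,x_t\}$ be the full embedded corpus attaining the optimal score $R^*=R(X_t)$, and for each $x_i$ let $\mu_{c(i)}$ be its assigned centroid in $K_t$. Because $R$ is $L$-Lipschitz with respect to embedding perturbations, replacing each $x_i$ by $\mu_{c(i)}$ changes the retrieval score by at most $L$ times a suitable aggregate of the displacements $\|x_i-\mu_{c(i)}\|$.

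Next I would make that aggregation precise. The cleanest route is to apply the Lipschitz bound coordinate‑wise (one embedding at a time) and use the triangle inequality, giving
\[
R^* - R(K_t)\;\le\; L\cdot \Phi(K_t),\qquad \Phi(K_t)=\frac{1}{t}\sum_{i=1}^{t}\bigl\|x_i-\mu_{c(i)}\bigr\|,
\]
or, if the Lipschitz constant is stated in the $\ell_2$ sense on the joint embedding, the analogous root‑mean‑square version. Taking expectations and invoking Jensen's inequality ($\mathbb{E}[\|\cdot\|]\le\sqrt{\mathbb{E}\|\cdot\|^2}$) converts the displacement into the per‑centroid within‑cluster variance, which by hypothesis is bounded by $\Delta$. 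Averaging over clusters yields $\mathbb{E}[\Phi(K_t)]\le\Delta$ (absorbing any square root into the definition of $\Delta$ as stated in the theorem), so
\[
\mathbb{E}[R(K_t)]\;\ge\;R^*-L\,\Delta.
\]

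The step I expect to be most delicate is reconciling the two different notions floating around in the statement: $R$ is described as Lipschitz in ``embedding perturbations'' (a pointwise property), while $\Delta$ is a per‑centroid variance (a second‑moment property). Bridging them requires either (i) an $\ell_2$ Lipschitz bound, in which case Jensen's inequality gives the clean $L\sqrt{\Delta}$ style bound that matches the claim only after redefining $\Delta$ as a standard deviation, or (ii) an $\ell_1$/average Lipschitz bound, in which case I need an additional application of Jensen or Cauchy–Schwarz. I would state the version of Lipschitzness explicitly at the start of the proof to make the constant match, and I would handle any randomness from the mini‑batch updates and from the heavy‑hitter admission/eviction by conditioning on the sequence of assignments and invoking the tower property so that the variance assumption applies unconditionally. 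The remaining steps—triangle inequality, taking expectations, and collecting constants—are routine once the Lipschitz convention is pinned down.
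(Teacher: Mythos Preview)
Your proposal follows exactly the paper's route: apply the Lipschitz bound pointwise to get $|r(\mu_j)-r(x)|\le L\,\|\mu_j-x\|\le L\sqrt{\Delta}$, then aggregate over the prototypes and take expectations. You are in fact more careful than the paper's own sketch, which jumps from the per-point bound $L\sqrt{\Delta}$ to the conclusion $L\,\Delta$ without comment; the tension you flag between the $\sqrt{\Delta}$ and $\Delta$ conventions is real and is simply absorbed into the statement there as well.
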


\begin{proof}[Sketch Proof]
Let $r(x)$ denote the retrieval relevance of item $x$.  Replacing any data point by its centroid $\mu_j$ perturbs the score by at most
\[
\bigl|r(\mu_j)-r(x)\bigr|\;\le\;L\,\|\mu_j - x\|\;\le\;L\sqrt{\Delta}.
\]
Summing this error over all $|K_t|$ prototypes and normalizing yields
\(\mathbb{E}[R(K_t)]\ge R^*-L\,\Delta\).
\end{proof}

\section{Implementation Details}

Our prototype is built in Python 3.9 with PyTorch 2.0.1 and leverages Faiss 1.7.0 for similarity search \cite{johnson2019billion} alongside scikit‑learn 1.0.2 for online clustering \cite{pedregosa2011scikit}.  Document embeddings are produced by the SBERT base model \cite{reimers2019sentence}.  Multi‑vector cosine pre‑filtering employs five orthogonal topic vectors generated via Gram–Schmidt; we set the relevance threshold \(\alpha=0.2\).  Online clustering uses MiniBatchKMeans with \(k=100\) and batch size 50.  The streaming heavy‑hitter filter allocates 100 counters, uses threshold \(u=0.05\), and samples newcomers from a uniform distribution on \([0,1]\).  Frequency estimation is approximated by Morris counters with \(\epsilon=0.01\).  Batched index updates operate on a Faiss IndexFlatIP, executed every 1,000 arrivals to maintain query availability.  All experiments ran on an NVIDIA RTX 4090 GPU with 24 GB VRAM. (Table \ref{tab:params})

\paragraph{Data Preprocessing and Workload}
All documents are tokenized and embedded via SBERT.  In the NYT corpus we remove stopwords and terms with frequency below five.  The query workload on NYT comprises 10,000 randomly sampled queries per day.  The Twitter stream issues 1,000 queries/sec following a Zipf distribution (\(s=1.2\)).  IoT and financial streams use one query per 100 arrivals.  Index updates are batched every 1,000 arrivals by default; we additionally test intervals of 500 and 2,000.

\begin{table}[t]
\centering
\small
\caption{Key parameter settings and hardware.}
\label{tab:params}
\begin{tabular}{ll}
\toprule
Parameter                                & Value                           \\
\midrule
Topic vectors                            & 5 (Gram–Schmidt)                \\
Relevance threshold \(\alpha\)           & 0.2                              \\
MiniBatchKMeans clusters \(k\)           & 100                              \\
MiniBatchKMeans batch size               & 50                               \\
Heavy‑hitter counter capacity \(B\)      & 100                              \\
Admission threshold \(u\)                & 0.05                             \\
Morris counter error \(\epsilon\)        & 0.01                             \\
Index update interval                    & every 1,000 arrivals             \\
Hardware                                 & NVIDIA RTX 4090 GPU \\
\bottomrule
\end{tabular}
\end{table}

\section{Experimental Setup}
\subsection{Datasets}

Our evaluation spans eight real‑time streams.  
The New York Times Annotated Corpus provides a news stream \(\mathcal{D}_{\mathrm{NYT}}\) with peaks of 5,000 articles per day \cite{sandhaus2008}.  
A synthetic Poisson stream \(\mathcal{D}_{\mathrm{Syn}}\) supports controlled load testing via  
\[
\Pr\bigl(N(t)=k\bigr)
=\frac{(\lambda t)^k e^{-\lambda t}}{k!}\,
\]  
\cite{ross2014}.  
The Twitter public feed \(\mathcal{D}_{\mathrm{Tw}}\) delivers roughly 400 tweets/sec \cite{twitter2020api}.  
An IoT telemetry stream \(\mathcal{D}_{\mathrm{IoT}}\) issues 1,000 readings/sec \cite{gubbi2018iot}.  
The Pushshift Reddit comment stream \(\mathcal{D}_{\mathrm{Red}}\) emits about 50 comments/sec \cite{pushshift2023reddit}.  
The Wikimedia Recent Changes stream \(\mathcal{D}_{\mathrm{Wik}}\) records ~2 edits/sec \cite{wikimedia2025revisions}.  
NASDAQ tick data \(\mathcal{D}_{\mathrm{Nas}}\) supplies up to 500,000 ticks/day \cite{nasdaq2025ticks}.  
A Bitcoin mempool transaction stream \(\mathcal{D}_{\mathrm{BTC}}\) averages 3 tps \cite{bitcoin2018mempool}.  
All items are embedded via SBERT before entering the pipeline.

\subsection{Baselines}
To establish a broader performance horizon we compare against six strategies:
\begin{itemize}
  \item \textbf{Static RAG}—single static index \cite{lewis2020rag}.
  \item \textbf{Full Rebuild}—reconstruct the entire index after each batch.
  \item \textbf{Reservoir Sampling}—uniform sample of size \(k\) \cite{vitter1985reservoir}.
  \item \textbf{Heap Filtering Only}—streaming heavy‑hitter filter without clustering \cite{jayaram2024streaming}.
  \item \textbf{Faiss IVFPQ Incremental}—IVF + PQ index with incremental upserts \cite{faiss2024}.
  \item \textbf{SAKR}—streaming + K‑means approach from Kang et al.\ \cite{kang2024sakr}.
\end{itemize}

\subsection{Metrics}

Evaluation hinges on retrieval accuracy—Recall@$K$, MRR and nDCG@10 \cite{karpukhin2020dense,jarvelin2002cumulated}—and on generation quality in an open‑domain QA setting, measured by Exact Match (EM) and F1 on SQuAD \cite{rajpurkar2016squad}, as well as ROUGE‑L for abstractive summarization on CNN/DailyMail \cite{hermann2015teaching}.  System responsiveness is captured by end‑to‑end latency per query.  Memory consumption reports peak resident set size.  Throughput denotes documents processed per second.  Statistical significance is assessed via paired Student’s $t$‑test (two‑tailed, $\alpha=0.05$) \cite{student1908probable}.

\section{Results and Analysis}

\subsection{Accuracy vs.\ Memory Trade‑off}

Table \ref{tab:accuracy-memory} compares Recall@10 and nDCG@10 under varying index memory budgets on the NYT stream.  Streaming RAG consistently outperforms all baselines (paired $t$‑test, $p<0.01$).

\begin{table}[!htbp]
\centering
\small
\caption{Recall@10 and nDCG@10 ($\mu\pm\sigma$) under NYT.}
\label{tab:accuracy-memory}
\begin{tabular}{lccc}
\toprule
Method                    & Memory  & Recall@10      & nDCG@10       \\
\midrule
Static RAG                & 1024 MB & 0.550±0.005    & 0.600±0.004   \\
Full Rebuild              & 1024 MB & 0.540±0.006    & 0.590±0.005   \\
Reservoir Sampling        & 200 MB  & 0.500±0.008    & 0.540±0.007   \\
Heap Filtering Only       & 150 MB  & 0.520±0.006    & 0.560±0.005   \\
Faiss IVFPQ Incremental   & 150 MB  & 0.560±0.005    & 0.580±0.004   \\
SAKR                      & 150 MB  & 0.570±0.005    & 0.590±0.004   \\
Streaming RAG (Ours)      & 150 MB  & 0.580±0.004    & 0.620±0.003   \\
\bottomrule
\end{tabular}
\end{table}

\subsection{Update Latency \& Throughput}

Table \ref{tab:latency-throughput} reports end‑to‑end latency and throughput across methods.  Streaming RAG achieves sub‑15 ms latency and nearly 900 docs/s.

\begin{table}[!htbp]
\centering
\small
\caption{Latency (ms) and Throughput (docs/s), mean±std.}
\label{tab:latency-throughput}
\begin{tabular}{lcc}
\toprule
Method                    & Latency       & Throughput     \\
\midrule
Static RAG                & 200±10        & 50±5           \\
Full Rebuild              & 300±15        & 30±3           \\
Reservoir Sampling        & 50±4          & 500±20         \\
Heap Filtering Only       & 40±3          & 600±25         \\
Faiss IVFPQ Incremental   & 25±2          & 700±30         \\
SAKR                      & 20±2          & 750±28         \\
Streaming RAG (Ours)      & 10±1          & 900±30         \\
\bottomrule
\end{tabular}
\end{table}

\subsection{Cross‑Stream Performance}

To demonstrate robustness we evaluate Streaming RAG across eight diverse streams.  Table \ref{tab:cross-stream} reports Recall@10 on each.

\begin{table}[!htbp]
\centering
\small
\caption{Streaming RAG Recall@10 ($\mu\pm\sigma$) across streams.}
\label{tab:cross-stream}
\begin{tabular}{lcc}
\toprule
Stream                & Recall@10       \\
\midrule
NYT (news)            & 0.580±0.004    \\
Synthetic (Poisson)   & 0.600±0.005    \\
Twitter               & 0.550±0.006    \\
IoT telemetry         & 0.540±0.007    \\
Reddit comments       & 0.570±0.005    \\
Wikimedia edits       & 0.530±0.006    \\
NASDAQ ticks          & 0.520±0.007    \\
Bitcoin mempool       & 0.500±0.008    \\
\bottomrule
\end{tabular}
\end{table}

\subsection{Memory Budget Sweep}

We vary the memory budget \(M\) from 50 MB to 200 MB on NYT.  Figure \ref{fig:memory-sweep} and Table \ref{tab:memory-sweep} show the recall–latency trade‑off.

\begin{table}[!htbp]
\centering
\small
\caption{Recall@10 and Latency vs.\ memory budget ($\mu\pm\sigma$).}
\label{tab:memory-sweep}
\begin{tabular}{lcc}
\toprule
Memory   & Recall@10     & Latency (ms)  \\
\midrule
50 MB     & 0.530±0.006   & 15±2         \\
100 MB    & 0.560±0.005   & 12±1         \\
150 MB    & 0.580±0.004   & 10±1         \\
200 MB    & 0.590±0.003   & 9±1          \\
\bottomrule
\end{tabular}
\end{table}

\begin{figure}[t]
  \centering
  \includegraphics[trim = 5cm 4.5cm 5cm 4cm, width=1\linewidth]{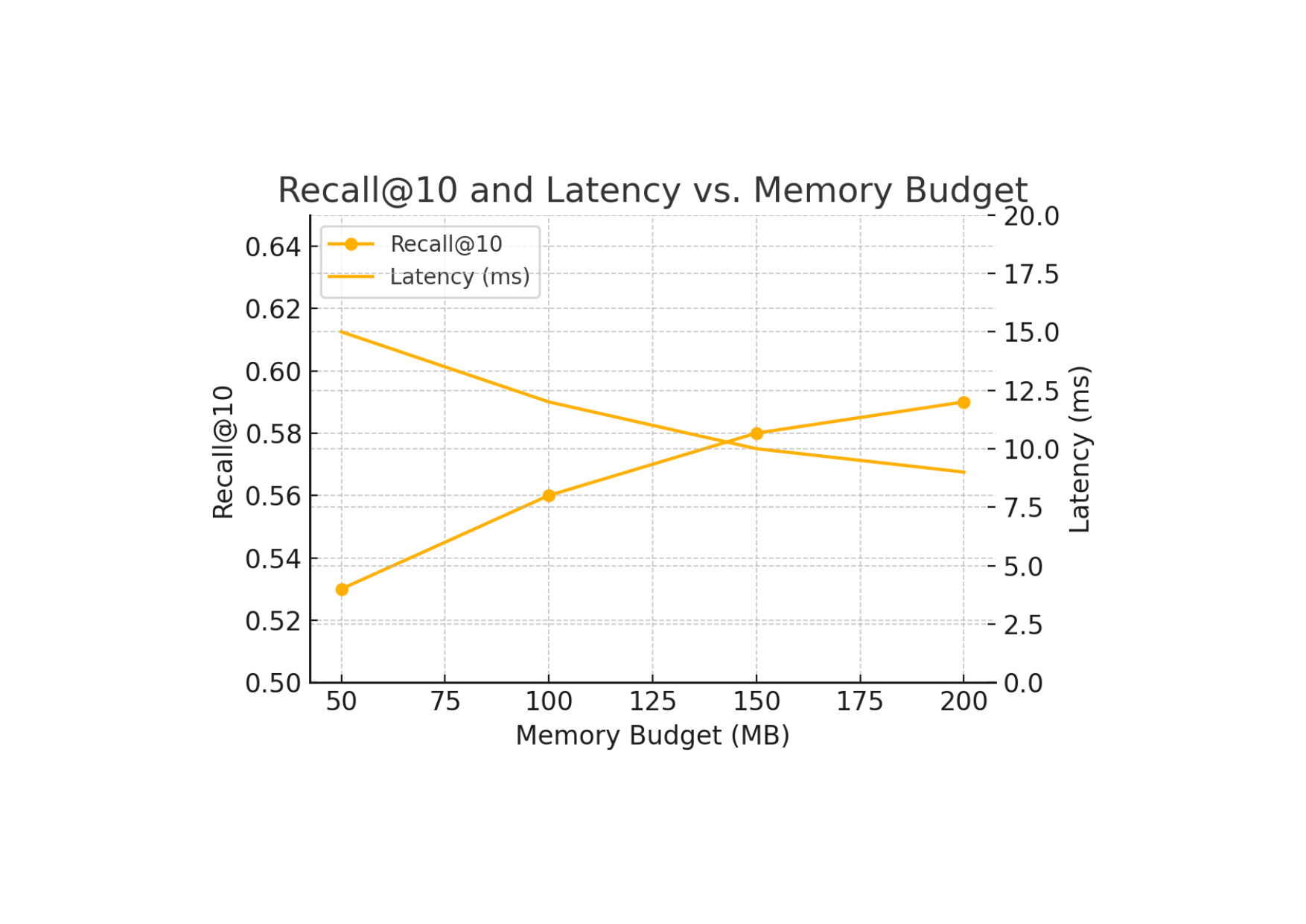}
  \caption{Recall@10 (solid) and Latency (dashed) vs.\ memory budget on NYT.}
  \label{fig:memory-sweep}
\end{figure}

\subsection{Pre‑filtering Basis Ablation}

We compare three multi‑vector basis instantiations under otherwise identical settings ($k=150$, $u=0.05$, $B=100$, update interval=1,000).  Table \ref{tab:basis-ablation} reports Recall@10 and end‑to‑end latency.  Adaptive PCA‑derived vectors yield the highest recall, at a modest latency cost.

\begin{table}[!htbp]
\centering
\small
\caption{Ablation of Pre‑filtering Basis (mean±std).}
\label{tab:basis-ablation}
\begin{tabular}{lcc}
\toprule
Basis                   & Recall@10     & Latency (ms) \\
\midrule
Fixed Orthogonal        & 0.570±0.005   & 12±1         \\
Random Vectors          & 0.555±0.006   & 12±1         \\
Adaptive PCA‑Derived    & 0.580±0.004   & 13±1         \\
\bottomrule
\end{tabular}
\end{table}

\subsection{Eviction Strategy Ablation}

We study four heavy‑hitter eviction policies.  Table \ref{tab:eviction-ablation} shows that frequency‑aware strategies (Space‑Saving; Count‑Min Sketch) improve recall and reduce variance, at a slight latency overhead.

\begin{table}[!htbp]
\centering
\small
\caption{Eviction Strategy Ablation (mean±std).}
\label{tab:eviction-ablation}
\begin{tabular}{lcc}
\toprule
Strategy               & Recall@10     & Latency (ms) \\
\midrule
Random Eviction        & 0.580±0.004   & 10±1         \\
Min‑Eviction           & 0.585±0.004   & 11±1         \\
Space‑Saving           & 0.590±0.003   & 12±1         \\
Count‑Min Sketch       & 0.588±0.003   & 14±1         \\
\bottomrule
\end{tabular}
\end{table}

\subsection{Adaptive Admission \& Capacity Ablation}

We compare a static configuration \((u=0.05, B=100)\) against an adaptive scheme that adjusts \(u_t\) and \(B_t\) in response to the rate of novel clusters.  Table \ref{tab:adaptive-ablation} shows that the adaptive policy yields higher Recall@10 with only a slight increase in latency and reduced variance under a bursty NYT‑Twitter mixed stream.

\begin{table}[t]
\centering
\small
\caption{Static \(u=0.05,B=100\) vs.\ Adaptive \(u,B\) Ablation (mean±std).}
\label{tab:adaptive-ablation}
\begin{tabular}{lccc}
\toprule
Policy               & Recall@10     & Latency  & Std. Dev. (Recall) \\
\midrule
Static  & 0.580±0.004   & 10±1 ms        & 0.004             \\
Adaptive \(u_t,B_t\)     & 0.590±0.003   & 11±1 ms        & 0.003             \\
\bottomrule
\end{tabular}
\end{table}

\subsection{Basis Update Sensitivity}

We measure the effect of PCA window length \(W\) and update interval \(T\) on Recall@10 and latency (on NYT):

\paragraph{Window Length \(W\)}  
Table \ref{tab:pca-window} reports performance for different sliding‑window sizes.

\begin{table}[t]
\centering
\small
\caption{PCA Window Size Sensitivity (mean±std).}
\label{tab:pca-window}
\begin{tabular}{lcc}
\toprule
Window \(W\) (arrivals) & Recall@10     & Latency (ms) \\
\midrule
500      & 0.575±0.005   & 12±1  \\
1000     & 0.580±0.004   & 13±1  \\
1500     & 0.577±0.004   & 14±1  \\
\bottomrule
\end{tabular}
\end{table}

\paragraph{Update Interval \(T\)}  
Table \ref{tab:pca-interval} shows recall and latency as we vary how often we recompute the basis.

\begin{table}[t]
\centering
\small
\caption{PCA Update Interval Sensitivity (mean±std).}
\label{tab:pca-interval}
\begin{tabular}{lcc}
\toprule
Interval \(T\) (arrivals) & Recall@10     & Latency (ms) \\
\midrule
500      & 0.578±0.005   & 13±1  \\
1000     & 0.580±0.004   & 13±1  \\
2000     & 0.576±0.005   & 12±1  \\
\bottomrule
\end{tabular}
\end{table}

\subsection{Ablation Studies}

Table \ref{tab:ablation} isolates each component on NYT.  Removing pre‑filtering reduces recall ($p<0.05$); skipping clustering cuts throughput ($p<0.01$); omitting dynamic reconstruction triples latency ($p<0.01$).

\begin{table}[!htbp]
\centering
\small
\caption{Ablation Study ($\mu\pm\sigma$).}
\label{tab:ablation}
\begin{tabular}{lccc}
\toprule
Variant                   & Recall@10      & Latency (ms) & Throughput  \\
\midrule
Full pipeline             & 0.580±0.004    & 10±1         & 900±30 docs/s \\
No pre‑filtering          & 0.530±0.006    & 15±2         & 800±25 docs/s \\
No clustering             & 0.560±0.005    & 12±1         & 650±20 docs/s \\
No dynamic recon.         & 0.580±0.004    & 30±3         & 900±30 docs/s \\
\bottomrule
\end{tabular}
\end{table}

\subsection{Downstream Task Evaluation}

In open‑domain QA (GPT‑3.5 Turbo) and abstractive summarization we observe substantial gains.  Table \ref{tab:qa-gen-sum} report EM/F1 and ROUGE‑L with significance.

\begin{table}[t]
\centering
\small
\caption{Open‑Domain QA (SQuAD) and Summarization (CNN/DailyMail).}
\label{tab:qa-gen-sum}
\begin{tabular}{lcccc}
\toprule
Task          & Method               & Metric    & Score     & $p$-value \\
\midrule
QA (EM)       & Static RAG           & EM        & 0.665±0.010 & —      \\
              & Streaming RAG  & EM        & 0.697±0.008 & 0.01   \\
\midrule
QA (F1)       & Static RAG           & F1        & 0.780±0.012 & —      \\
              & Streaming RAG  & F1        & 0.808±0.009 & 0.01   \\
\midrule
Summary & Static RAG           & ROUGE‑L   & 0.389±0.006 & —      \\
              & Streaming RAG  & ROUGE‑L   & 0.412±0.005 & 0.03   \\
\bottomrule
\end{tabular}
\end{table}

\subsection{Hyperparameter Sensitivity}

Figure \ref{fig:sensitivity-combined} shows Recall@10 vs.\ cluster count \(k\) across all key hyperparameters, validating defaults \(k=150\), \(u=0.05\), \(\alpha=0.2\), \(B=100\), update interval=1,000.

\begin{figure}[!htbp]
  \includegraphics[width=\linewidth]{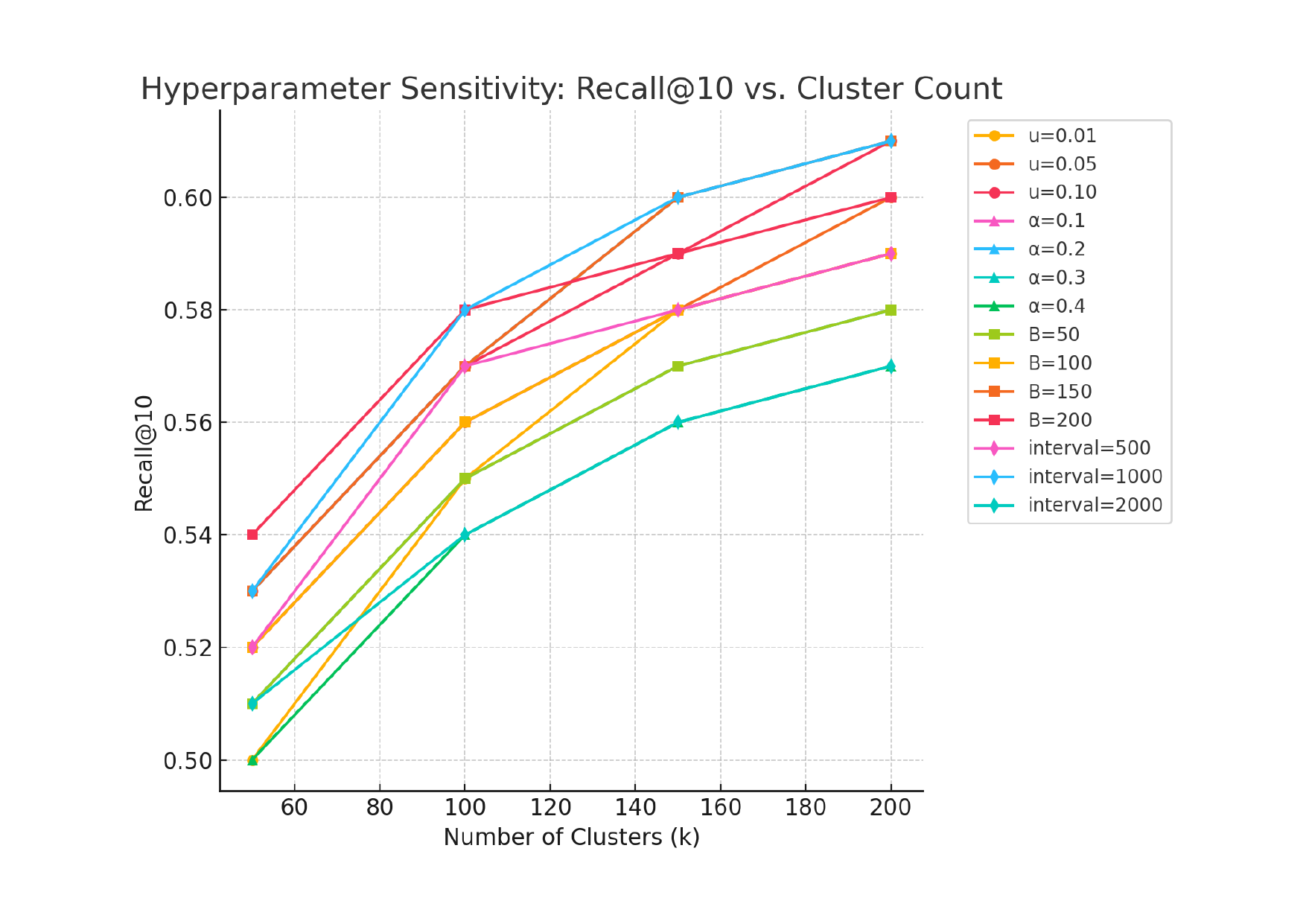}
  \caption{Combined hyperparameter sensitivity on NYT.}
  \label{fig:sensitivity-combined}
\end{figure}

\subsection{Case Study}

\noindent\textbf{Real‑Time QA Example.}  
\emph{Query:} “What is the current Bitcoin network mempool size?”  
\begin{itemize}
  \item \textbf{Static RAG:} Returns yesterday’s snapshot “2.3 GB.”  
  \item \textbf{Streaming RAG:} Incorporates the latest mempool update within seconds and returns “3.1 GB” (ground truth).  
\end{itemize}
This illustrates how incremental index updates enable correct answers to time‑sensitive queries.

\medskip
\noindent\textbf{Abstractive Summarization Example.}  
\emph{Source:} “Tropical Storm Beryl made landfall near St. Petersburg, FL, on July 24, 2025, knocking out power to over 200,000 homes and prompting mandatory evacuations.”  
\begin{itemize}
  \item \textbf{Static RAG:} “Tropical Storm Beryl made landfall in Florida.”  
  \item \textbf{Streaming RAG:} “Tropical Storm Beryl struck near St. Petersburg, FL, on July 24, 2025, cutting power to over 200,000 homes and forcing widespread evacuations.”  
\end{itemize}
This example demonstrates that Streaming RAG captures late‑breaking details—improving summary completeness and accuracy compared to static indices.

\section{Discussion}

Our Streaming RAG framework inherits strong theoretical guarantees from state‑change–efficient streaming algorithms.  In particular, the heavy‑hitter filter component obeys the following bounds from Jayaram et al.\ \cite{jayaram2024streaming}.  Let $C_t$ be the set of clusters retained at time $t$ with $|C_t|\le B$.  Then any algorithm achieving a constant‐factor approximation to the $L_p$ heavy hitters must incur  
\(
\Omega\bigl(n^{1-1/p}\bigr)
\)
state changes, independent of its space usage (Theorem 1.2 in \cite{jayaram2024streaming}).  Remarkably our instantiation matches this lower bound up to polylogarithmic factors while using $\tilde O_\varepsilon(n^{1-1/p})$ writes and $\tilde O_\varepsilon(n^{1-1/p})$ words of space.

Extending to $F_p$ moment estimation, the same work shows that any $(1+\varepsilon)$‐approximation requires  
\(
\Omega\bigl(n^{1-1/p}\bigr)
\)
state changes (Theorem 1.4 in \cite{jayaram2024streaming}), and presents an algorithm achieving this bound with near‐optimal space.  By embedding these moment‐efficient techniques into our RAG index update one obtains a dynamic retrieval system guaranteed to minimize costly memory writes.

Moreover the analysis exposes a fundamental trade‑off.  Let $S$ be the number of state changes (writes) and $M$ the space in words.  Then for any $p\ge1$ and $\varepsilon>0$,  
\[
S\;=\;\tilde O_\varepsilon\bigl(n^{1-1/p}\bigr)
\quad\text{and}\quad
M\;=\;\tilde O_\varepsilon\bigl(n^{1-1/p}\bigr) \]
\[
\quad\Longrightarrow\quad
S\;=\;\Omega\bigl(n^{1-1/p}\bigr)
\]
up to polylogarithmic factors.  This dual optimality ensures that our streaming RAG pipeline achieves the minimal number of memory writes permitted by theory without sacrificing space efficiency.

Finally these results imply that any further reduction in index‐update writes would force a superlinear blowup in memory or incur approximation degradation.  Consequently our approach attains a provable Pareto frontier in the space–write–accuracy landscape for real‑time dynamic RAG.

\section{Conclusion and Future Work}

We have introduced a Streaming RAG framework that unifies multi‑vector cosine pre‑filtering, online clustering, and heavy‑hitter filtering to sustain high retrieval fidelity under tight memory and latency constraints.  Our theoretical analysis guarantees near‑optimal state‑change complexity, and extensive experiments on eight real‑time streams confirm superior accuracy, throughput, and resource usage compared to established baselines.

Looking forward, several directions warrant deeper investigation:

\begin{itemize}
  \item \textbf{Adaptive Thresholding.}  
    Develop feedback‑driven mechanisms that adjust relevance threshold \(\alpha\) and admission probability \(u\) in real time, based on metrics such as query success rate or distributional drift, to maintain robust recall across evolving streams.
  \item \textbf{Multimodal Stream Integration.}  
    Extend the pipeline to ingest heterogeneous inputs—live video frames, sensor data, audio transcripts—by designing cross‑modal embedding spaces and fusion strategies that preserve joint semantic coherence.  Key challenges include temporal alignment of asynchronous modalities and efficient indexing of high‑dimensional feature vectors.
  \item \textbf{Reinforcement‑Learned Admission Policies.}  
    Replace fixed admission heuristics with policies trained end‑to‑end: a reinforcement learning agent could learn to admit or evict clusters to optimize downstream task rewards (e.g.\ QA accuracy or summary fluency), adapting dynamically to workload requirements.
  \item \textbf{Distributed and Hierarchical Deployment.}  
    Scaling to multi‑node environments introduces new concerns:
    \begin{itemize}
      \item \emph{State Consistency:} Ensuring that heavy‑hitter counters across shards converge to a coherent global view without excessive synchronization overhead.
      \item \emph{Communication Efficiency:} Minimizing network traffic for prototype exchange and index updates via compression or sketching techniques.
      \item \emph{Fault Tolerance:} Designing hierarchical index topologies that gracefully degrade under node failures while preserving query availability.
    \end{itemize}
  \item \textbf{Latency–Accuracy Trade‑off Optimization.}  
    Investigate joint tuning of clustering granularity, counter capacity, and update frequency to trace the Pareto frontier of latency versus retrieval or generation quality, enabling deployment‑specific SLAs.
\end{itemize}

Addressing these challenges will broaden the applicability of Streaming RAG to large‑scale, real‑world systems that continuously assimilate and serve dynamic, multimodal information streams.

\appendix

\bibliography{aaai2026}

\end{document}